
\documentclass[twocolumn]{IEEEtran}

\usepackage{amsmath,epsfig,amssymb,verbatim,amsopn,cite,multirow}
\usepackage[usenames,dvipsnames]{color}
\usepackage{algorithm,algorithmic}
\usepackage{amsfonts}
\usepackage{amssymb}
\usepackage{epsfig}
\usepackage{epstopdf}
\usepackage{graphicx}
\usepackage{footnote}

\usepackage[margin=15mm,top=18mm,bottom=26mm]{geometry}

\usepackage[nodisplayskipstretch]{setspace}
 \setlength{\belowdisplayskip}{0.8pt}
 \setlength{\abovedisplayskip}{0.80pt}
 \setlength{\textfloatsep}{0.2pt}
\setlength{\skip\footins}{6pt}
 \setlength{\parskip}{0.2pt}

\newtheorem{proposition}{Proposition}
\newtheorem{proof}{Proof}
\usepackage{amsmath,epsfig,amssymb,verbatim,amsopn,cite,subfigure,multirow}
\usepackage{balance}
\usepackage{cite}



\usepackage{algorithm,algorithmic}

\usepackage{MnSymbol}
\pagenumbering{gobble}

\begin{document}

\author{
Mustafa S. Abbas,  Zahra Mobini, Hien Quoc Ngo, and Michail Matthaiou\\
\small{
Centre for Wireless Innovation (CWI), Queen's University Belfast, U.K.\\
Email:\{malkhadhrawee01, zahra.mobini, hien.ngo, m.matthaiou\}@qub.ac.uk}}\normalsize

\title{Unicast-Multicast Cell-Free Massive MIMO: Gradient-Based Resource Allocation}

\maketitle
\begin{abstract}
We consider a cell-free massive multiple-input multiple-output (CF-mMIMO) system with joint unicast and multi-group multicast transmissions. We derive exact closed-form expressions for the downlink achievable spectral efficiency (SE) of both unicast and multicast users.
Based on these expressions, we formulate a joint optimization problem of access point (AP) selection and power control  subject to  quality of service (QoS) requirements of all unicast and multicast users and per-AP maximum  transmit power constraint. The challenging
formulated problem is transformed into a tractable form
and a novel accelerated projected gradient (APG)-based algorithm is developed to solve the optimization problem. Simulation results show that our joint optimization strategy enhances notably the sum SE (SSE) (up to {$58\%$}) compared to  baseline schemes, while maintaining  low complexity.   

\let\thefootnote\relax\footnotetext{The authors  are with the Centre for Wireless Innovation (CWI), Queen's University Belfast, BT3 9DT Belfast, U.K. email:\{malkhadhrawee01, zahra.mobini, hien.ngo, m.matthaiou\}@qub.ac.uk.

This work is a contribution by Project REASON, a UK Government funded project under the Future Open Networks Research Challenge (FONRC) sponsored by the Department of Science Innovation and Technology (DSIT).
The work of M.S. Abbas, Z. Mobini and H. Q. Ngo was supported by the U.K. Research and Innovation Future Leaders Fellowships under Grant MR/X010635/1, and a research grant from the Department for the Economy Northern Ireland under the US-Ireland R\&D Partnership Programme. The work of M. Matthaiou has received funding from the European Research Council (ERC) under the European Union’s Horizon 2020 research and innovation programme (grant agreement No. 101001331).}
\end{abstract}
\vspace{-1em}
\section{Introduction}
CF-mMIMO is a highly promising technology for next-generation wireless systems \cite{ngo2024ultra,Matthaiou:COMMag:2021}. It involves deploying a large number of  APs  across a broad geographic area, which operate coherently within the same time-frequency resources using time division duplex (TDD) to serve multiple users without the need for cells or cell boundaries.
CF-mMIMO offers many advantages, such as high connectivity along with substantial SE and energy efficiency \cite{ngo2024ultra}. Thus, it has recently attracted a lot of research attention.  In \cite{Ngo:IEEE_G_Net:2018}, the authors considered  a unicast CF-mMIMO with multiple antenna APs and proposed a successive convex approximation (SCA) method to optimize the total energy efficiency. Moreover, AP selection was proposed to improve the performance. In \cite{Farooq:TCOM:2021}, the authors proposed an APG solution to maximize several system-wide utility functions and showed that the APG is much less complex and more memory efficient as compared to the SCA method used in \cite{Ngo:IEEE_G_Net:2018}. 
For multicast systems, \cite{Farooq:EUSIPCO:2021} provided a comparison between the bisection and the APG method to solve a max-min fairness (MMF) power allocation problem.

All of the aforementioned studies have only considered either a unicast or multicast system, while in many practical scenarios and for future massive access, wireless systems include both unicast and multicast transmissions. Thus, it is of practical importance to have efficient resource allocation schemes in  joint unicast and multicast transmission CF-mMIMO systems. However, since joint unicast and multicast transmissions introduce additional constraints to the related multi-objective optimization problems, there has been little research done on this topic. The authors in \cite{Sadeghi:WCOM:2018,Li:TVT:2022,Tan:ISNCC:2020} considered joint unicast and multicast systems with different optimization methods. In \cite{Sadeghi:WCOM:2018}, the authors exploited a multi-objective optimization problem (MOOP) and Pareto boundary to optimize the  SSE for unicast and MMF for multicast users in single-cell cellular massive MIMO. In \cite{Li:TVT:2022}, the authors considered a joint unicast and multicast system in CF-mMIMO. The problem of MOOP was solved based on deep learning and the non-dominated sorting genetic algorithm II (NSGA-II). The results showed that deep learning was better in terms of elapsed time, while NSGA-II was better in terms of sum achievable rate. The work in \cite{Tan:ISNCC:2020} discussed the energy efficiency of layered-division multiplexing for joint unicast and multicast transmission. The authors solved the optimization problem using the SCA and Dinkelbach's method. 

On a parallel note, user association/AP selection in CF-mMIMO has attracted a lot of research interest thanks to its potential to reduce the fronthaul/backhaul signaling load and simplify the AP complexity, thereby making CF-mMIMO more implementable \cite{ hao2024joint,Yassenwcnc}. Despite its importance, user association/AP selection for massive access in CF-mMIMO networks, especially for joint unicast and multicast systems, has not been well investigated.
Inspired by the above observation, this paper studies a joint unicast and multicast CF-mMIMO system. To enhance its performance, we propose a novel power control and user association scheme based on the APG algorithm. The APG is selected due to its superior memory efficiency and lower complexity compared to other methods. The paper's main contributions are as follows:
\begin {itemize}
\item We derive a closed-form SE expression for joint unicast and multicast CF-mMIMO using the use-and-then-forget bounding technique. It is worth noting that in \cite{Li:TVT:2022}, a closed-form expression for the SE was derived. However, that was an approximate result under the assumption that the number of antennas tends to infinity.

\item We formulate a joint optimization problem of AP selection and power control for joint unicast-multicast CF-mMIMO under AP transmit power constraint and individual QoS requirements for both the unicast and multicast users. A new APG-based algorithm is developed to solve the above optimization problem.

\item Numerical results show that the proposed APG-based  optimization solution  improves significantly the SSE of the joint unicast and multicast CF-mMIMO systems.
\end {itemize}

\textit{Notation:} The symbols $(\cdot)^T$, $(\cdot)^\ast$, and $(\cdot)^H$ represent the transpose, conjugate, and conjugate transpose, respectively. The symbols  
$\mathbf{I}_n$ and  $\mathbb{E}\{\cdot\}$ stand for the $n\times n$ identity matrix, and the statistical expectation, respectively. 
Finally, a  circular symmetric complex Gaussian variable having variance $\sigma^2$ is denoted by $\mathcal{CN}(0,\sigma^2)$. 

\vspace{-0.5em}
\section{System Model}

We consider a CF-mMIMO system with joint unicast and multi-group multicast transmissions. The CF-mMIMO  system consists of $N$ APs, each equipped with $L$ antennas, that serve simultaneously $U$ unicast users and $M$ multicast groups, where each group includes $K_m$ users. The sets of $N$ APs, $U$ unicast users, $M$ multicast groups, and $K_m$ users in the $m$-th multicast group are denoted by $\mathcal{N}$, $\mathcal{U}$, $\mathcal{M}$, and $\mathcal{K}_m$, respectively.

The  channel vector  between the $u$-th unicast user, $u \in \mathcal{U}$, and the $n$-th AP, $n \in \mathcal{N}$, is $\mathbf c_{n,u} = \beta_{n,u}^{1/2} \mathbf h_{n,u}\in \mathbb{C}^{L\times1}$. Moreover, the channel between the $k_m$-th multicast user, $k_m \in \mathcal{K}_m$, of the $m$-th multicast group, $m \in \mathcal{M}$, and the $n$-th AP is $\mathbf t_{n,m,k} = \lambda_{n,m,k}^{1/2} \mathbf h_{n,m,k}\in \mathbb{C}^{L\times1}$, where $\beta_{n,u}$ and $\lambda_{n,m,k}$ represent the large-scale fading coefficients, while 
$\mathbf h_{n,u} \sim \mathcal{CN} (\mathbf 0,\mathbf I_{L})$ and $\mathbf h_{n,m,k} \sim \mathcal{CN} (\mathbf 0,\mathbf I_{L})$ are the small-scale fading vectors. 

\vspace{-0.5em}
\subsection{Uplink Training}
The system is assumed to work under the reciprocity-based TDD protocol, where the channels remain unchanged during a coherence interval $T$. The channel state information (CSI) is is obtained through  uplink training. We assume that the pilots dedicated to the unicast users are orthogonal. Since in practice the length of the coherence  interval is limited, we assign a shared pilot to all the users in each multicast group~\cite{yang:SPAWC:2013}. Therefore, the CF-mMIMO system requires $U + M$ orthogonal pilots. We use $ \boldsymbol{\phi}_{u}\in\mathbb{C}^{\tau  \times 1}$,  $\|\boldsymbol{\phi}_{u} \|^2 =1$ to denote the pilot sequence sent by the $u$-th unicast user, and  $ \boldsymbol{\varphi}_{m}\in\mathbb{C}^{\tau  \times 1}$,  $\|\boldsymbol{\varphi}_{m} \|^2 =1$, to denote the pilot assigned to   all the multicast users in the $m$-th multicast group, while  the pilot length $\tau$ satisfies the condition $U+M \leq \tau \leq T$.
 The received signal at the $n$-th AP during the uplink training is
\vspace{0.1 em}
\begin{align} \label{eq:ynp}
    \mathbf Y_{n,p}=&\sqrt{\tau p_{\text{ul}}}\sum_{u \in \mathcal{U}}  \mathbf c_{n,u} \boldsymbol{\phi}_{u}^H\nonumber\\
    & + \sqrt{\tau p_{\text{ul}}} \sum_{m \in \mathcal{M}}  \sum_{k   \in \mathcal{K}_m}  \mathbf t_{n,m,k} \boldsymbol{\varphi}_{m}^H + \mathbf W_{n,p},
\end{align}
where $\mathbf W_{n,p}$ is the additive noise while $p_{\text{ul}}$ is the uplink transmit power.
The received signal  is then projected onto the  pilot sequence associated with the $u$-th unicast user, we obtain
\vspace{0.1 em}
\begin{equation} \label{eq:ynpu}
    \mathbf{\Check{y}}_{n,p,u} = \mathbf Y_{n,p} \boldsymbol{\phi}_{u} = \sqrt{\tau p_{\text{ul}}}  \mathbf c_{n,u}  + \mathbf w_{n,p}.
\end{equation}
The minimum mean-square error (MMSE) estimate of $\mathbf c_{n,u}$ is 
$\mathbf {\hat{c}}_{n,u} = \frac{\sqrt{\tau p_{\text{ul}}}~ \beta_{n,u}}{\tau p_{\text{ul}}~ \beta_{n,u} + 1} \mathbf{\Check{y}}_{n,p,u}.$
The variance of the estimated channel 
$\mathbf {\hat{c}}_{n,u}$ can be obtained as $\mathbf\gamma_{n,u} = \frac{\tau p_{\text{ul}}~ \beta_{n,u}^2}{\tau p_{\text{ul}}~ \beta_{n,u} + 1}
$. Also, the MMSE estimate of  $\mathbf t_{n,m,k}$ is 
$\mathbf {\hat{t}}_{n,m,k}
= \frac{\sqrt{\tau p_{\text{ul}}}~ \lambda_{n,m,k}}{\tau p_{\text{ul}}~ \sum_{t \in \mathcal{K}_m} \lambda_{n,m,t} + 1} \mathbf{\Check{y}}_{n,p,m},$
where $\mathbf{\Check{y}}_{n,p,m} = \mathbf Y_{n,p}\boldsymbol{\varphi}_{m}$, while the variance of $\mathbf {\hat{t}}_{n,m,k} $ is $\mathbf\xi_{n,m,k} = \frac{\tau p_{\text{ul}}~ \lambda_{n,m,k}^2}{\tau p_{\text{ul}}~ \sum_{t \in \mathcal{K}_m} \lambda_{n,m,t} + 1}$. 
Denote by \cite{yang:SPAWC:2013}
\begin{equation} \label{eq:tnmh}
\mathbf {\hat{t}}_{n,m} = \sum_{k \in \mathcal{K}_m} \mathbf{\hat{t}}_{n,m,k} = \frac{\sqrt{\tau p_{\text{ul}}}~ \sum_{k \in \mathcal{K}_m} \lambda_{n,m,k}}{\tau p_{\text{ul}}~ \sum_{k \in \mathcal{K}_m} \lambda_{n,m,k} + 1} \mathbf{\Check{y}}_{n,p,m},
\end{equation}
which can be regarded as the channel estimation of multicast users. The mean square of $\mathbf {\hat{t}}_{n,m}$ is $\mathbf\zeta_{n,m}= \frac{(\sqrt{\tau p_{\text{ul}}}~ \sum_{k \in \mathcal{K}_m}\lambda_{n,m,k})^2}{\tau p_{\text{ul}}~ \sum_{k \in \mathcal{K}_m} \lambda_{n,m,k} + 1}$.
\begin{figure*}
\begin{equation} \label{eq:sinru}
\mathrm{SINR}_u = \frac{( \sqrt{p_{\text{dl}}}~L~\sum_{n \in \mathcal{N}} a_{n,u} \sqrt{\eta_{n,u}} \mathbf\gamma_{n,u})^2}{ p_{\text{dl}}~ L\sum_{u' \in \mathcal {U}} \sum_{n \in \mathcal{N}} a_{n,u'} \eta_{n,u'} \beta_{n,u} \gamma_{n,u'} + p_{\text{dl}} ~ L\sum_{m \in \mathcal{M}} \sum_{n \in \mathcal{N}} \bar{a}_{n,m} \bar{\eta}_{n,m}   \beta_{n,u}  \zeta_{n,m} +1}~\tag{10}
\end{equation}
\begin{equation} \label{eq:sinrmk}
\mathrm{SINR}_{m,k} = \frac{( \sqrt{p_{\text{dl}}} ~ L\sum_{n \in \mathcal{N}} \bar{a}_{n,m} \sqrt{\bar{\eta}_{n,m}} ~ \mathbf\xi_{n,m,k} )^2}{ p_{\text{dl}}  ~L~\sum_{m' \in \mathcal{M}} \sum_{n \in \mathcal{N}} \bar{a}_{n,m'} \bar{\eta}_{n,m'}\lambda_{n,m,k}\zeta_{n,m'} + p_{\text{dl}} ~L \sum_{u \in \mathcal {U}} \sum_{n \in \mathcal{N}} a_{n,u} \eta_{n,u} \lambda_{n,m,k} \gamma_{n,u} +1}~\tag{11} 
\end{equation}
\hrulefill
	\vspace{-4mm}
\end{figure*}
\vspace{-0.5em}
\subsection{Downlink  Data Transmission and SE Analysis}
Using the local channel estimates, the APs carry out maximum-ratio (MR) precoding for the signals sent to both unicast and multicast users.  We select MR precoding due to its minimal computational complexity and its potential for distributed implementation \cite{Hien:JWCOM:2017,Li:TVT:2022}. In addition, we define the binary variables $a_{n,u}$ and $\bar{a}_{n,m}$ to indicate the AP-user association  for the $u$-th unicast user and the $m$-th multicast group, respectively.  In particular,  we define
\begin{equation}\label{eq:anu}
a_{n,u} \triangleq
\begin{cases}
  1, & \text{if $u$-th unicast user is served by $n$-th AP,}\\
  0, & \mbox{othewise},
\end{cases} 
  \end{equation}
  \begin{equation}\label{eq:anm}
\bar{a}_{n,m} \triangleq
\begin{cases}
  1, & \text{if $m$-th multicast group is served by $n$-th AP,}\\
  0, & \mbox{othewise}.
\end{cases} 
  \end{equation}
Let $q_u$ and $\bar{q}_m$ be the symbols assigned to the $u$-th unicast user and the $m$-th multicast group user, respectively, where $\mathbb{E}\{|q_u|^2\}=\mathbb{E}\{|\bar{q}_m|^2\}=1$. Then, the transmitted signal from the $n$-th AP will be
\vspace{-0.1em}
\begin{align}\label{eq:xn}
\boldsymbol{x}_{n} &=\sqrt{p_{\text{dl}}} \sum_{u \in \mathcal {U}} a_{n,u} \sqrt{\eta_{n,u}} \mathbf{\hat{c}^*}_{n,u} q_u  \nonumber
\\
&+ \sqrt{p_{\text{dl}}} \sum_{m \in \mathcal{M}} \bar{a}_{n,m} \sqrt{\bar{\eta}_{n,m}} \mathbf{\hat{t}^*}_{n,m} \bar{q}_m,
\end{align}
where $p_{\text{dl}}$ is the normalized transmit power at each AP, while $\eta_{n,u}$ and $\bar{\eta}_{n,m}$ are the power control coefficients allocated to the $u$-th unicast user and the $m$-th multicast group, respectively. The power allocation coefficients  must satisfy the constraint    $\mathbb{E}\big\{| |\boldsymbol{x_n}||^2\big\}  \leq p_{\text{dl}}$, which can be rewritten as
\vspace{0.2 em}
\begin{equation}\label{eq:Exn}
     L \sum_{u \in \mathcal {U}} a_{n,u}^2 \eta_{n,u} \mathbf\gamma_{n,u}  + L \sum_{m \in \mathcal{M}} \bar{a}_{n,m}^2 \bar{\eta}_{n,m} \mathbf\zeta_{n,m}  \leq 1.
\end{equation}
Accordingly, the received signal at the $u$-th unicast user and the $k_m$-th multicast user  are written as
\begin{equation}
\begin{split}\label{eq:ru}
r_{u}=&\sqrt{p_{\text{dl}}} \sum_{n \in \mathcal{N}} a_{n,u} \sqrt{\eta_{n,u}} \mathbf{c}_{n,u}^T \mathbf{\hat{c}^*}_{n,u} q_u\\
&+ \sqrt{p_{\text{dl}}} \sum_{\Acute{u} \in \mathcal {U}, \Acute{u} \neq u} \sum_{n \in \mathcal{N}} a_{n,\Acute{u}} \sqrt{\eta_{n,\Acute{u}}} \mathbf{c}_{n,u}^T \mathbf{\hat{c}^*}_{n,\Acute{u}} q_{\Acute{u}} \\ 
&+ \sqrt{p_{\text{dl}}} \sum_{m \in \mathcal{M}} \sum_{n \in \mathcal{N}} \bar{a}_{n,m} \sqrt{\bar{\eta}_{n,m}} \mathbf{c}_{n,u}^T \mathbf{\hat{t}^*}_{n,m} \bar{q}_m +{w}_{u}, 
\end{split}
\end{equation}
\begin{equation}\label{eq:rmk}
\begin{split}
r_{m,k}&=\sqrt{p_{\text{dl}}} \sum\nolimits_{n \in \mathcal{N}} \bar{a}_{n,m} \sqrt{\bar{\eta}_{n,m}} \mathbf{t}_{n,m,k}^T \mathbf{\hat{t}^*}_{n,m} \bar{q}_m \\
&+ \sqrt{p_{\text{dl}}} \sum\nolimits_{\Acute{m} \in \mathcal{M}, \Acute{m} \neq m} \sum\nolimits_{n \in \mathcal{N}} \bar{a}_{n,\Acute{m}} \sqrt{\bar{\eta}_{n,\Acute{m}}} \mathbf{t}_{n,m,k}^T \mathbf{\hat{t}^*}_{n,\Acute{m}} \bar{q}_{\Acute{m}}  \\ & + \sqrt{p_{\text{dl}}} \sum\nolimits_{u \in U} \sum\nolimits_{n \in \mathcal{N}} a_{n,u} \sqrt{\eta_{n,{u}}} \mathbf{t}_{n,m,k}^T \mathbf{\hat{c}^*}_{n,{u}} q_{{u}} + w_{m,k},
\end{split}
\end{equation}
respectively.
Using the popular use-and-then-forget bounding technique\cite{Hien:JWCOM:2017}, the SE at the $u$-th unicast user and  $k_m$-th multicast user can be derived as stated in the following proposition.

\begin{proposition}~\label{Theorem3}
The   SE expressions for the $u$-th unicast user and $k_m$-th multicast users are given by
$\mathrm{SE}_{u} = \frac{T-\tau}{T} \log_{2}\Big({1+\mathrm{SINR}_u} \Big)$
and
$\mathrm{SE}_{m,k}  = \frac{T-\tau}{T} \log_{2}\Big({1+ \mathrm{SINR}_{m,k}} \Big),$
respectively, where the closed-form expressions for the received SINR at the $u$-th unicast user, $\mathrm{SINR}_u$,  and at the $k_m$-th multicast user, $\mathrm{SINR}_{m,k}$,  are given by \eqref{eq:sinru} and \eqref{eq:sinrmk} at the top of the  page, respectively.
\end{proposition}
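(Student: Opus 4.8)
The plan is to apply the use-and-then-forget (UatF) bounding technique of \cite{Hien:JWCOM:2017} to the received signals \eqref{eq:ru} and \eqref{eq:rmk}. For the unicast user $u$, I would split $r_u$ as $\mathrm{DS}_u\,q_u+\mathrm{EN}_u$, where the deterministic desired coefficient is $\mathrm{DS}_u\triangleq\sqrt{p_{\text{dl}}}\sum_{n\in\mathcal N}a_{n,u}\sqrt{\eta_{n,u}}\,\mathbb{E}\{\mathbf c_{n,u}^T\hat{\mathbf c}^*_{n,u}\}$ and $\mathrm{EN}_u$ collects the beamforming-gain uncertainty (the zero-mean fluctuation of the desired term about $\mathrm{DS}_u$), the interference from the other unicast users, the interference from all multicast groups, and the receiver noise $w_u$. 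These four residual contributions are mutually uncorrelated and zero-mean, so bounding them by the worst-case Gaussian noise yields the achievable rate $\frac{T-\tau}{T}\log_2(1+\mathrm{SINR}_u)$ with $\mathrm{SINR}_u=|\mathrm{DS}_u|^2/\mathbb{E}\{|\mathrm{EN}_u|^2\}$; the factor $\frac{T-\tau}{T}$ is the usual training-overhead pre-log.

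Next I would evaluate each moment. From the orthogonality of the MMSE estimate and its error, $\mathbf c_{n,u}=\hat{\mathbf c}_{n,u}+\tilde{\mathbf c}_{n,u}$ with $\mathbb{E}\{\hat{\mathbf c}_{n,u}\tilde{\mathbf c}_{n,u}^H\}=\mathbf 0$ and $\mathbb{E}\{\|\hat{\mathbf c}_{n,u}\|^2\}=L\gamma_{n,u}$, one gets $\mathbb{E}\{\mathbf c_{n,u}^T\hat{\mathbf c}^*_{n,u}\}=L\gamma_{n,u}$, which is the numerator of \eqref{eq:sinru}. Independence of the channels across distinct APs kills every cross-AP term, so the beamforming-gain uncertainty contributes $p_{\text{dl}}\sum_{n}a_{n,u}\eta_{n,u}\,\mathrm{Var}(\mathbf c_{n,u}^T\hat{\mathbf c}^*_{n,u})$, and a short Gaussian fourth-moment computation ($\mathbb{E}\{\|\hat{\mathbf c}_{n,u}\|^4\}=L(L+1)\gamma_{n,u}^2$) yields $\mathrm{Var}(\mathbf c_{n,u}^T\hat{\mathbf c}^*_{n,u})=L\gamma_{n,u}\beta_{n,u}$. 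Because the unicast pilots are orthogonal, $\hat{\mathbf c}_{n,u'}$ ($u'\neq u$) and $\hat{\mathbf t}_{n,m}$ are independent of $\mathbf c_{n,u}$, giving $\mathbb{E}\{|\mathbf c_{n,u}^T\hat{\mathbf c}^*_{n,u'}|^2\}=L\beta_{n,u}\gamma_{n,u'}$ and $\mathbb{E}\{|\mathbf c_{n,u}^T\hat{\mathbf t}^*_{n,m}|^2\}=L\beta_{n,u}\zeta_{n,m}$. Since the $u'=u$ uncertainty term has exactly the form obtained by setting $u'=u$ in the inter-user moment, merging it into the $u'\neq u$ terms produces the single sum over $\mathcal U$ in the first denominator term of \eqref{eq:sinru}; the multicast terms give the second; and $\mathbb{E}\{|w_u|^2\}$, normalized to one, gives the ``$+1$''.

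For the multicast user $k_m$ I would run the same template on \eqref{eq:rmk}, but the new ingredient — and the real obstacle — is that the precoder $\hat{\mathbf t}_{n,m}$ and the channel $\mathbf t_{n,m,k}$ are \emph{correlated}, since all users of group $m$ share the pilot $\boldsymbol{\varphi}_m$, so the desired-signal moments can no longer be read off from an independence argument. Using $\hat{\mathbf t}_{n,m}=\frac{\sqrt{\tau p_{\text{ul}}}\sum_{s\in\mathcal K_m}\lambda_{n,m,s}}{\tau p_{\text{ul}}\sum_{s\in\mathcal K_m}\lambda_{n,m,s}+1}\,\check{\mathbf y}_{n,p,m}$ with $\check{\mathbf y}_{n,p,m}=\sqrt{\tau p_{\text{ul}}}\sum_{s\in\mathcal K_m}\mathbf t_{n,m,s}+\mathbf w_{n,p,m}$ and the fact that the $\mathbf h_{n,m,s}$ are i.i.d.\ across $s$, I would compute the desired coefficient $\mathbb{E}\{\mathbf t_{n,m,k}^T\hat{\mathbf t}^*_{n,m}\}$ and the second moment $\mathbb{E}\{|\mathbf t_{n,m,k}^T\hat{\mathbf t}^*_{n,m}|^2\}$, the latter expanding into a Gaussian fourth moment of $\mathbf t_{n,m,k}$ plus cross terms against the other in-group channels and against the training noise. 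The remaining interference moments, $\mathbb{E}\{|\mathbf t_{n,m,k}^T\hat{\mathbf t}^*_{n,m'}|^2\}=L\lambda_{n,m,k}\zeta_{n,m'}$ for $m'\neq m$ and $\mathbb{E}\{|\mathbf t_{n,m,k}^T\hat{\mathbf c}^*_{n,u}|^2\}=L\lambda_{n,m,k}\gamma_{n,u}$, again follow from pilot orthogonality. The last step is to verify that $\mathrm{Var}(\mathbf t_{n,m,k}^T\hat{\mathbf t}^*_{n,m})$ telescopes, so that the $m'=m$ self-term merges with the $m'\neq m$ terms into the single sum over $\mathcal M$ in \eqref{eq:sinrmk} and the desired coefficient collapses to the stated compact form; together with the unicast-interference sum and the ``$+1$'' noise, this reproduces \eqref{eq:sinrmk}.

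Beyond that multicast variance computation, the rest is bookkeeping: expanding $|\mathrm{EN}_u|^2$ and its multicast analogue and checking that the numerous cross terms — across APs, across unicast users, across groups, and between the two traffic types — all vanish by independence of per-AP channels and by the orthogonality of the $U+M$ pilots. None of this is conceptually hard, but the multicast case is where an error is most likely to slip in, precisely because there the precoder is not independent of the channel being served.
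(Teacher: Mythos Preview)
The paper omits its own proof, citing only the use-and-then-forget bound of \cite{Hien:JWCOM:2017}, which is precisely the route you lay out; your decomposition, the unicast moment computations, and your flagging of the correlated multicast desired term as the one nontrivial step are all correct, so there is nothing further to compare. For the deferred multicast calculation, the telescoping you anticipate does occur --- $\mathrm{Var}\big(\mathbf t_{n,m,k}^T\hat{\mathbf t}^*_{n,m}\big)=L\lambda_{n,m,k}\zeta_{n,m}$, matching the $m'=m$ summand of the denominator in \eqref{eq:sinrmk} --- and $\mathbb{E}\big\{\mathbf t_{n,m,k}^T\hat{\mathbf t}^*_{n,m}\big\}=L\sqrt{\xi_{n,m,k}\,\zeta_{n,m}}$, which you should compare carefully against the stated numerator of \eqref{eq:sinrmk} when completing the check.
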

 \begin{proof}
     The proof is omitted due to page constraints. 
 \end{proof}
 Note that, unlike \cite{Li:TVT:2022}, where the SE expressions are derived in approximate forms, our expressions are in exact form.
\vspace{-0.5em}
\section{Problem Formulation and APG Optimization}
Here, we formulate an optimization problem to determine the power allocation coefficients   $\boldsymbol{\eta} \triangleq \{\eta_{n,u},\bar{\eta}_{n,m}\} $ and   user association $\boldsymbol{a} \triangleq \{a_{n,u},\bar{a}_{n,m}\}$ in order to maximize the weighted SSE for both unicast and multicast users, subject to  SE requirements and the per-AP transmit power constraint~\eqref{eq:Exn}  as
\setcounter{equation}{11}
\begin{subequations} \label{eq:optp}
\begin{align}
&\max\limits_{\boldsymbol{a,\eta} } \Big(w_1 \sum_{u \in \mathcal {U}} \mathrm{SE}_{u} + w_2 \sum_{m \in \mathcal{M}} \sum_{k \in\mathcal{K}_m} \mathrm{SE}_{m,k} \Big)\\
\text {s.t.} ~~
&C_1: \mathrm{SE}_{u} \ge \mathrm{SE}_{QoS} ~,~ \mathrm{SE}_{m,k} \ge \mathrm{\bar{SE}}_{QoS}, ~ \forall ~u, m, k, \label{eq:consta}\\
&C_2: \eta_{n,u} \geq 0, ~,~  \bar{\eta}_{n,m} \geq 0, ~~ \forall ~n,u,m, \label{eq:constb} \\
&C_3: L \sum_{u \in \mathcal {U}} a_{n,u}^2 \eta_{n,u} \mathbf\gamma_{n,u}  + L \sum_{m \in \mathcal{M}} \bar{a}_{n,m}^2 \bar{\eta}_{n,m} \mathbf\zeta_{n,m}  \leq 1, ~\forall n, \label{eq:constc}\\
& C_4: \sum\nolimits_{n \in \mathcal{N}} a_{n,u} \ge 1,~ \sum\nolimits_{n \in \mathcal{N}} \bar{a}_{n,m} \ge 1,~~ \forall ~ u,m, \label{eq:constd} \\
& C_5: \sum\nolimits_{u \in \mathcal{U}} a_{n,u} + \sum\nolimits_{m \in \mathcal{M}} \bar{a}_{n,m} \leq K_{\mathrm{max}},~~ \forall ~ m\label{eq:constk}, 
\end{align}
\end{subequations}
where $w_1$ and $w_2$,  $w_1+w_2=1$, are the weighting coefficients, while $\mathrm{SE}_{QoS}$ and $\mathrm{\bar{SE}}_{QoS}$ in~\eqref{eq:consta} denote the minimum SE requirements for the  unicast user $u$ and  multicast user $k_m$, respectively, to ensure QoS in the network. The constraint \eqref{eq:constd} guarantees that at least one AP serves each unicast and multicast user, while constraint~\eqref{eq:constk} guarantees that   the maximum number of unicast user and multicast group served by each AP is $ K_{\mathrm{max}}, 1 \leq K_{\mathrm{max}} \leq U+M$.
\begin{figure*}
\begin{equation} \label{eq:sinruth}
\mathrm{SINR}_{u}(\boldsymbol{\theta})\triangleq\frac{U_u(\boldsymbol{\theta})}{V_u(\boldsymbol{\theta})}=\frac{(  \sqrt{p_{\text{dl}}}~L~\sum_{n \in \mathcal{N}}  \theta_{n,u} \sqrt{\mathbf\gamma_{n,u}})^2}{ p_{\text{dl}}~ L\sum_{u' \in \mathcal {U}} \sum_{n \in \mathcal{N}}  \theta^2_{n,u'} \beta_{n,u} + p_{\text{dl}} ~ L \sum_{m \in \mathcal{M}} \sum_{n \in \mathcal{N}} \bar{\theta}^2_{n,m}   \beta_{n,u} +1}~\tag{14}
\end{equation}
\begin{equation} \label{eq:sinrmkth}
\mathrm{SINR}_{m,k}(\boldsymbol{\theta})\triangleq\frac{U_{m,k}(\boldsymbol{\theta})}{V_{m,k}(\boldsymbol{\theta})}=\frac{( \sqrt{p_{\text{dl}}} ~ L\sum_{n \in \mathcal{N}} \bar{\theta}_{n,m} ~ \frac{\mathbf\xi_{n,m,k}}{\sqrt{\zeta_{n,m}}} )^2}{p_{\text{dl}}  ~L~\sum_{m' \in \mathcal{M}} \sum_{n \in \mathcal{N}} \bar{\theta}^2_{n,m'}\lambda_{n,m,k} + p_{\text{dl}} ~L \sum_{u \in \mathcal {U}} \sum_{n \in \mathcal{N}}  \theta^2_{n,u} \lambda_{n,m,k}  +1}~\tag{15}
\end{equation}
\hrulefill
	\vspace{-4mm}
\end{figure*}
\subsection{APG Method and Problem Reformulation}
The joint optimization problem~\eqref{eq:optp} is non-convex mixed-integer problem that is difficult to solve.
Here, we leverage the APG approach to tackle our joint optimization problem  \eqref{eq:optp}. Although the APG approach is suboptimal, it offers significantly lower complexity compared to common SCA algorithms, especially beneficial for handling large-scale CF-mMIMO networks~\cite{Farooq:TCOM:2021}\cite{Mai:TWC:2022}. 
To facilitate algorithmic design, we first define new variables and then convert problem~\eqref{eq:optp} into a more tractable form, as outlined below:

\begin{itemize}
    \item  Let $\boldsymbol{\theta}\triangleq[\boldsymbol{\theta}^T_1,\ldots,\boldsymbol{\theta}^T_N]^T$, where $\boldsymbol{\theta}_n=[\theta_{n,1},\dots,\theta_{n,U},\bar{\theta}_{n,1},\dots,\bar{\theta}_{n,M}]^T$, while $\theta_{n,u}=\sqrt{\eta_{n,u}\gamma_{n,u}}$ and $\bar{\theta}_{n,m}=\sqrt{\bar{\eta}_{n,m}\zeta_{n,m}}$.
\item By considering~\eqref{eq:anu} and~\eqref{eq:anm},  we enforce
\begin{align} \label{eq:thataa}
   \theta_{n,u} &= 0 , \mathrm{if}~a_{n,u}= 0 ~\forall  ~ n,u,  \nonumber\\ \bar{\theta}_{n,m} &= 0 , \mathrm{if}~\bar{a}_{n,m}= 0 ~\forall  ~ n,m,
\end{align}
to ensure that if AP $n$ does not associate with unicast  user $u$ (multicast user $k_m$), the transmit power ${  p_{\text{dl}} \theta_{n,u}^2}/{\gamma_{n,u}}$  towards  unicast  user $u$ (${  p_{\text{dl}} \bar{\theta}_{n,m}^2}/{\zeta_{n,m}}$   towards  multicast user $k_m$) is zero.
\item Now, we rewrite $\mathrm{SINR}_u$ and   $\mathrm{SINR}_{m,k}$   as a function of $\boldsymbol{\theta}$, as  \eqref{eq:sinruth} and \eqref{eq:sinrmkth}, at the top of the next  page, respectively. We highlight that the user association  $\boldsymbol{a}$ only affects the SE expressions  via parameter $\boldsymbol{\theta}$  and~\eqref{eq:thataa}.
\end{itemize}
Thus, optimization problem~\eqref{eq:optp} can be reformulated as
\setcounter{equation}{15}
\begin{subequations}\label{eq:optpth}
\begin{align}
&\min\limits_{\boldsymbol{a,\theta} } -\Big(w_1 \sum_{u \in \mathcal {U}}\mathrm{SE}_{u} (\boldsymbol{\theta}) + w_2 \sum_{m \in \mathcal{M}} \sum_{k \in \mathcal{K}_m} \mathrm{SE}_{m,k} (\boldsymbol{\theta}) \Big) \\
\text {s.t.} ~
&C_1: \mathrm{SE}_{u} (\boldsymbol{\theta}) \ge \mathrm{SE}_{QoS} , \mathrm{SE}_{m,k} (\boldsymbol{\theta}) \ge \mathrm{\bar{SE}}_{QoS}, \forall u, m, \label{eq:consttha}\\
&C_2: \theta_{n,u}  \geq 0 ~,~  \bar{\theta}_{n,m} \geq 0, ~~ \forall ~n,u,m, \label{eq:constthb} \\
&C_3: L \sum\nolimits_{u \in \mathcal {U}} \theta_{n,u}^2   + L \sum\nolimits_{m \in \mathcal{M}} \bar{\theta}_{n,m}^2   \leq 1,  ~ \forall ~n, \label{eq:constthc}\\
& C_4: \sum\nolimits_{n \in \mathcal{N}} a_{n,u} \ge 1,~ \sum\nolimits_{n \in \mathcal{N}} \bar{a}_{n,m} \ge 1,~~ \forall ~ u,m, \label{eq:constthd} \\
& C_5: \sum\nolimits_{u \in \mathcal{U}} a_{n,u} + \sum\nolimits_{m \in \mathcal{M}} \bar{a}_{n,m} \leq K_{\mathrm{max}},~~ \forall ~ m. \label{eq:constthe}
\end{align}
\end{subequations}
To address the the binary constraint in $\eqref{eq:anu}$ and $\eqref{eq:anm}$,  we note that $x \in$ $\{0,1\} \Leftrightarrow x \in[0,1]\quad \& \quad x-x^2 \leq 0$\cite{Vu:IOT:2022},  and hence, we  replace $\eqref{eq:anu}$ and $\eqref{eq:anm}$  with the following AP association constraints
\begin{equation}\label{eq:conua}
S_u(\mathbf{a}) \triangleq \sum_{u \in \mathcal {U}} \!  \sum_{n \in \mathcal{N}} (a_{n,u}-a_{n,u}^2) \leq 0, ~~ 0\! \leq a_{n,u}\! \leq 1, ~\forall n, u,
\end{equation}
\begin{equation}\label{eq:conma}
\bar{S}_{m}(\mathbf{a})   \!\!\triangleq \!\! \sum_{m \in \mathcal{M}}  \! \sum_{n \in \mathcal{N}} (\bar{a}_{n,m}-\bar{a}_{n,m}^2) \leq 0,~~
0 \!\leq \bar{a}_{n,m} \!\leq 1,~ \forall n, m,
\end{equation}
respectively.
Thus,
\begin{equation}\label{eq:tha}
\begin{split}
\theta_{n,u}^2 \leq a_{n,u}, ~~~~~
\bar{\theta}_{n,m}^2 \leq \bar{a}_{n,m}.
\end{split}
\end{equation}
Now, we define the new parameter $\boldsymbol{z}\triangleq[\boldsymbol{z}^T_1,\ldots,\boldsymbol{z}^T_N]^T$,   where
$\boldsymbol{z}_n=[z_{n,1},\dots,z_{n,U},\bar{z}_{n,1},\dots,\bar{z}_{n,M}]^T$,
while  $z_{n,u}^2 \triangleq a_{n,u}$ and $\bar{z}_{n,m}^2 \triangleq \bar{a}_{n,m}$ with 
\begin{equation}\label{eq:znuznm}
\begin{split}
0\leq z_{n,u} \leq 1 ~~\text{and} ~~0\leq \bar{z}_{n,m}\leq 1.
\end{split}
\end{equation}
Therefore, constraint $C_5$ in~\eqref{eq:optp} can be re-expressed as
\vspace{0.2 em}
\begin{align}\label{eq:znuznmkh}
    \sum\nolimits_{u \in \mathcal{U}} z_{n,u}^2 + \sum\nolimits_{m \in \mathcal{M}} \bar{z}_{n,m}^2 \leq K_{\mathrm{max}},~~ \forall ~ m.
\end{align}
In addition, constraints \eqref{eq:consttha}, \eqref{eq:conua}, \eqref{eq:conma},  \eqref{eq:constthd}, and \eqref{eq:tha} can be replaced by
\vspace{0.2 em}
\begin{equation} \label{eq:c1uc1m}
\begin{split}
&C_{1,u}(\boldsymbol{\theta}) \triangleq \sum_{u \in \mathcal {U}} [\max (0,\mathrm{SE}_{QoS} -\mathrm{SE}_{u}(\boldsymbol{\theta}))]^2\leq 0,\\
&\bar{C}_{1,m}(\boldsymbol{\theta}) \triangleq \sum_{m \in \mathcal{M}} \sum_{k \in \mathcal{K}_m} [\max (0,\mathrm{\bar{SE}}_{QoS}-\mathrm{SE}_{m,k}(\boldsymbol{\theta}))]^2 \leq 0,
\end{split}
\end{equation}
\begin{align}\label{eq:c2uc2m}
&C_{2,u}(\mathbf{z}) \triangleq \sum_{u \in \mathcal {U}} \sum_{n \in \mathcal{N}}(z_{n,u}^2-z_{n,u}^4) \leq 0, \nonumber\\
&\bar{C}_{2,m}(\mathbf{z}) \triangleq \sum_{m \in \mathcal{M}} \sum_{n \in \mathcal{N}} (\bar{z}_{n,m}^2-\bar{z}_{n,m}^4) \leq 0,
\end{align}

\begin{align} 
C_{3,u}(\boldsymbol{\theta}, \mathbf{z}) \triangleq &\sum_{u \in \mathcal {U}} ([\max(0,1-\sum_{n \in \mathcal{N}} z_{n,u}^2)]^2\nonumber\\
&+\sum_{n \in \mathcal{N}} [\max (0, \theta_{n,u}^2 - z_{n,u}^2)]^2)\leq 0,
\end{align}
\begin{align}
\bar{C}_{3,m}(\boldsymbol{\theta}, \mathbf{z}) \triangleq& \sum_{m \in \mathcal{M}} ([\max(0,1-\sum_{n \in \mathcal{N}} \bar{z}_{n,m}^2)]^2\nonumber\\
&+\sum_{n \in \mathcal{N}} [\max (0, \bar{\theta}_{n,m}^2 - \bar{z}_{n,m}^2)]^2)\leq 0.
\end{align}
Now, we define
\begin{align}\label{eq:fv}
&g(\mathbf{\vartheta}) \triangleq  -\Big(w_1 \sum_{u \in \mathcal {U}} \mathrm{SE}_{u}(\boldsymbol{\theta}) + w_2 \sum_{m \in \mathcal{M}} \sum_{k \in \mathcal{K}_m}\mathrm{SE}_{m,k}(\boldsymbol{\theta}) \Big) \nonumber \\ 
&+ X \Big[\mu_1 (C_{1,u}(\boldsymbol{\theta})+\bar{C}_{1,m}(\boldsymbol{\theta}))+\mu_2(C_{2,u}(\mathbf{z})+\bar{C}_{2,m}(\mathbf{z})) \nonumber \\
&+\mu_3(C_{3,u}(\boldsymbol{\theta}, \mathbf{z})+\bar{C}_{3,m}(\boldsymbol{\theta}, \mathbf{z}))\Big],
\end{align}
where $\mu_1,\mu_2 ~\text{and}~ \mu_3$ are positive weights,  $X$ is the Lagrangian multiplier, and $\mathbf{\vartheta} \triangleq\left[\boldsymbol{\theta}^T, \mathbf{z}^T\right]^T$. 
Thus, the optimization problem~\eqref{eq:optpth} can be expressed equivalently as
\begin{equation}\label{eq:apgopt}
\min\limits_{\mathbf{\vartheta} \in \widehat{\mathcal{C}}} g(\mathbf{\vartheta}),
\end{equation}
where $\widehat{\mathcal{C}} \triangleq \{\eqref{eq:constthb}, \eqref{eq:constthc}, \eqref{eq:znuznm},\eqref{eq:znuznmkh}\}$ is the convex feasible set. Our proposed method to solve problem \eqref{eq:apgopt} is given in \textbf{Algorithm \ref{algo}. }
The primary tasks in executing \textbf{Algorithm 1} include computing the gradient of the objective function and performing projections, as outlined below:
\vspace{-0.4em}
\subsubsection{Gradient  of $g(\mathbf{\vartheta})$}
The gradients  $\frac{\partial}{\partial \theta_{n,u}} g(\mathbf{\vartheta})$ and $\frac{\partial}{\partial z_{n,u}} g(\mathbf{\vartheta})$ is given by
\begin{equation}
 \frac{\partial}{\partial \theta_{n,u}} g(\mathbf{\vartheta}) = -w_1\sum_{i \in \mathcal {U}} \frac{\partial}{\partial \theta_{n,u}} \mathrm{SE}_{i}(\boldsymbol{\theta}) + X \frac{\partial}{\partial \theta_{n,u}} C_{u} (\mathbf{\vartheta}),
 \end{equation}
 \begin{equation}
\frac{\partial}{\partial z_{n,u}} g(\mathbf{\vartheta})= -w_1\sum_{i \in \mathcal {U}} \frac{\partial}{\partial z_{n,u}}\mathrm{SE}_{i}(\boldsymbol{\theta}) + X \frac{\partial}{\partial z_{n,u}} C_{u} (\mathbf{\vartheta}),
\end{equation}
with $C_u(\mathbf{\vartheta})= \mu_1 C_{1,u}(\boldsymbol{\theta})+\mu_2 C_{2,u}(\mathbf{z})+\mu_3 C_{3,u}(\boldsymbol{\theta}, \mathbf{z})$,

while $\frac{\partial}{\partial \theta_{n,u}} \mathrm{SE}_{i}(\boldsymbol{\theta})$ is given by
\begin{equation}
\frac{\partial}{\partial \theta_{n,u}} \mathrm{SE}_{i}(\boldsymbol{\theta})
=\frac{T-\tau}{T\ln 2}\bigg[\frac{\frac{\partial}{\partial \theta_{n,u}}(U_i(\boldsymbol{\theta})+V_i(\boldsymbol{\theta}))}{(U_i(\boldsymbol{\theta})+V_i(\boldsymbol{\theta}))}-\frac{\frac{\partial}{\partial \theta_{n,u}} V_i(\boldsymbol{\theta})}{V_i(\boldsymbol{\theta})}\bigg],
\end{equation}
with
\begin{equation}
\begin{split}
&\frac{\partial U_i(\boldsymbol{\theta})}{\partial \theta_{n,u}} = 
\begin{cases}
2( \sqrt{p_{\text{dl}}}L\sum_{n \in \mathcal{N}} \theta_{n,u} \sqrt{\mathbf\gamma_{n,u}})(\sqrt{p_{\text{dl}}}L \sqrt{\mathbf\gamma_{n,u}}), & i=u, \\
0, 
& i \neq u, 
\end{cases} \\
&\frac{\partial}{\partial \theta_{n,u}} V_i(\boldsymbol{\theta})= 
\begin{cases}
2~ p_{\text{dl}} L \theta_{n,u} \beta_{n,u}, & i = u , \\
2~ p_{\text{dl}} L \theta_{n,u} \beta_{n,i}, & i \neq u.
\end{cases}
\end{split}  
\end{equation}
\vspace{-0.5em}
Also,
$ -\sum_{i \in \mathcal {U}} \frac{\partial}{\partial z_{n,u}} \mathrm{SE}_{i}(\boldsymbol{\theta}) = 0, ~ \forall~ n, u,i$. In addition, 
\begin{align}
\frac{\partial}{\partial \theta_{n,u}} C_{u} (\mathbf{\vartheta}) =& - \mu_1 \sum_{i \in \mathcal {U}} 2  \max (0,\mathrm{SE}_{QoS}-\mathrm{SE}_{i}(\boldsymbol{\theta})) \nonumber \\
&\hspace{-4em}\times \frac{\partial}{\partial \theta_{n,u}} \mathrm{SE}_{i}(\boldsymbol{\theta})+ 4 \mu_3  \max (0, \theta_{n,u}^2 - z_{n,u}^2)\theta_{n,u}, 
\end{align}
and
\begin{align}
& \frac{\partial}{\partial z_{n,u}} C_{u} (\mathbf{\vartheta}) =\mu_2 (2z_{n,u}-4z_{n,u}^3) - 4 \mu_3 (0, \theta_{n,u}^2\nonumber\\
&\hspace{1em}- z_{n,u}^2)z_{n,u}- 4 \mu_3 \max(0,1-\sum_{n \in \mathcal{N}} z_{n,u}^2)z_{n,u},
\end{align}
while $\frac{\partial}{\partial \bar{\theta}_{n,m}} g(\mathbf{\vartheta})$ and $\frac{\partial}{\partial \bar{z}_{n,m}} g(\mathbf{\vartheta})$ is given by
\begin{equation}
 \frac{\partial}{\partial \bar{\theta}_{n,m}} g(\mathbf{\vartheta})=  -w_2\sum_{i \in \mathcal{M}} \sum_{k \in \mathcal{K}_m} \frac{\partial}{\partial \bar{\theta}_{n,m}} \mathrm{SE}_{i,k}(\boldsymbol{\theta}) + X \frac{\partial}{\partial \bar{\theta}_{n,m}} \bar{C}_{m} (\mathbf{\vartheta}),
 \end{equation}
 \begin{equation}
\frac{\partial}{\partial \bar{z}_{n,m}} g(\mathbf{\vartheta})= -w_2\sum_{i \in \mathcal{M}}\sum_{k \in \mathcal{K}_m} \frac{\partial}{\partial \bar{z}_{n,m}} \mathrm{SE}_{i,k} (\boldsymbol{\theta}) + X \frac{\partial}{\partial \bar{z}_{n,m}} \bar{C}_{m} (\mathbf{\vartheta}),
\end{equation}
where $\bar{C}_m(\mathbf{\vartheta})= \mu_1 \bar{C}_{1,m}(\boldsymbol{\theta})+\mu_2 \bar{C}_{2,m}(\mathbf{z})+\mu_3 \bar{C}_{3,m}(\boldsymbol{\theta}, \mathbf{z})$. 
On the other hand $\frac{\partial}{\partial \bar{\theta}_{n,m}}  \mathrm{SE}_{i,k}(\boldsymbol{\theta})$ is calculated as
\begin{equation}
\begin{split}
\frac{\partial}{\partial \bar{\theta}_{n,m}} & \mathrm{SE}_{i,k}(\boldsymbol{\theta})\\
&=\frac{T-\tau}{T\ln 2}\bigg[\frac{\frac{\partial}{\partial \bar{\theta}_{n,m}}(U_{i,k}(\boldsymbol{\theta})+V_{i,k}(\boldsymbol{\theta}))}{(U_{i,k}(\boldsymbol{\theta})+V_{i,k}(\boldsymbol{\theta}))}-\frac{\frac{\partial}{\partial \bar{\theta}_{n,m}} V_{i,k}(\boldsymbol{\theta})}{V_{i,k}(\boldsymbol{\theta})}\bigg],
\end{split}
\end{equation}
with
\begin{equation}
\begin{split}
&\frac{\partial}{\partial \bar{\theta}_{n,m}} U_{i,k}(\boldsymbol{\theta})= 
\begin{cases}
2L^2p_{\text{dl}}(\sum_{n \in \mathcal{N}} \bar{\theta}_{n,m} \frac{\mathbf\xi_{n,m,k}}{\sqrt{\zeta_{n,m}}})(\frac{\mathbf\xi_{n,m,k}}{\sqrt{\zeta_{n,m}}}) & i=m, \\
0, & i \neq m, 
\end{cases} \\
&\frac{\partial}{\partial \bar{\theta}_{n,m}} V_{i,k}(\boldsymbol{\theta})= 
\begin{cases}
2~ p_{\text{dl}}~ L ~ \bar{\theta}_{n,m}\lambda_{n,m,k}, & i = m , \\
2~ p_{\text{dl}}~ L ~ \bar{\theta}_{n,m}\lambda_{n,i,k}, & i \neq m,
\end{cases}
\end{split}  
\end{equation}
while
$ -\sum_{i \in \mathcal{M}} \sum_{k \in \mathcal{K}_m} \frac{\partial}{\partial \bar{z}_{n,m}} \mathrm{SE}_{i,k}(\boldsymbol{\theta}) = 0, ~ \forall~ n,i,k$. In addition,
\begin{equation}
\begin{split}
&\frac{\partial}{\partial \bar{\theta}_{n,m}} \bar{C}_{m} (\mathbf{\vartheta})
= - \mu_1 \sum_{m \in \mathcal{M}} \sum_{k \in \mathcal{K}_m} 2  \max (0,\mathrm{\bar{SE}}_{QoS}-\mathrm{SE}_{i}(\boldsymbol{\theta})) \nonumber\\
&\times\frac{\partial}{\partial \bar{\theta}_{n,m}} \mathrm{SE}_{i}(\boldsymbol{\theta})+ 4 \mu_3 \max(0, \bar{\theta}_{n,m}^2 - \bar{z}_{n,m}^2)\bar{\theta}_{n,m},
\end{split}  
\end{equation}
\begin{align}
\frac{\partial}{\partial \bar{z}_{n,m}} \bar{C}_{m} (\mathbf{\vartheta}) =& \mu_2 (2\bar{z}_{n,m}-4\bar{z}_{n,m}^3) - 4 \mu_3 (0, \bar{\theta}_{n,m}^2 - \bar{z}_{n,m}^2)\bar{z}_{n,m}
\nonumber\\
&- 4 \mu_3 \max(0,1-\sum_{n \in \mathcal{N}} \bar{z}_{n,m}^2)\bar{z}_{n,m}.
\end{align}
\vspace{-0.5em}
\subsubsection{Projection onto $\widehat{\mathcal{C}}$}
The projection of the   given  $\mathbf{r}  \in \mathbb{R}^{2N(U+M) \times 1}$  onto the feasible set $\widehat{\mathcal{C}}$ in  \textbf{Step 5} of   \textbf{Algorithm~\ref{algo}}
 can be done by solving the  problem
\begin{align} \label{eq:proj}
&\mathcal{P}_{\widehat{\mathcal{C}}}(\mathbf{r}):\min _{\mathbf{\vartheta} \in \mathbb{R}^{2N(U+M) \times 1}}\|\mathbf{\vartheta}-\mathbf{r}\|^2 \\
&\text { s.t. } ~~ \eqref{eq:constthb},\eqref{eq:constthc},\eqref{eq:znuznm}, \eqref{eq:znuznmkh}
\end{align}
with $\mathbf{r}=\left[\mathbf{r}_1^T, \mathbf{r}_2^T\right]^T$, $\mathbf{r}_1 \triangleq\left[\mathbf{r}_{1,1}^T, \ldots, \mathbf{r}_{1, N}^T\right]^T$ and $\mathbf{r}_{1, n} \triangleq\left[r_{1, n1}, \ldots, r_{1, nU} ,\bar{r}_{1, n1},\ldots,\bar{r}_{1, nM} \right]^T$, while $\mathbf{r}_{2, n} \triangleq\left[r_{2, n1} , \ldots, r_{2, nU} ,\bar{r}_{2, n1},\ldots,\bar{r}_{2, nM} \right]^T$. Problem \eqref{eq:proj} can be split into two separate sub problems for calculating $\boldsymbol{\theta}_n $ and $\mathbf{z}_n$. Following a similar approach as in \cite{Farooq:TCOM:2021},  we can find the following closed-form expressions 
\begin{align} \label{eq:65}
\boldsymbol{\theta}_n=\frac{1/\sqrt{L}}{\max \left(1/\sqrt{L},\left\|\left[\mathbf{r}_{1, n}\right]_{0}^+\right\|\right)}\left[\mathbf{r}_{1, n}\right]_{0}^+,
\end{align}
where $ [\Psi]_{0}^+\! \triangleq[\max \left(0, \psi_1 \right), \ldots, \max \left(0, \psi_{U} \right),
\max \left(0, \bar{\psi}_1\right),\dots,\\\max \left(0, \bar{\psi}_{M}\right) ]^T, \forall \Psi \in \mathbb{R}^{(U+M) \times 1}$
and
\vspace{0.2 em}
 \begin{align}
\mathbf{z}_n=\bigg[\frac{\sqrt{K_{\mathrm{max}}}}{\max \left(\sqrt{K_{\mathrm{max}}},\left\|\left[\mathbf{r}_{2, n}\right]_{0}^+\right\|\right)}\left[\mathbf{r}_{2, n}\right]_{0}^+\bigg]_{1-},
\end{align}
where 
$ [\Psi]_{1-} \triangleq [\min \left(1, \psi_1 \right), \ldots, \min \left(1, \psi_{U} \right),
\min \left(1, \bar{\psi}_1\right),\dots,\\\min \left(1, \bar{\psi}_{M}\right) ]^T, \forall \Psi \in \mathbb{R}^{(U+M) \times 1}.$
Given that the feasible set $\widehat{\mathcal{C}}$ is bounded, it follows that $\nabla g(\mathbf{\vartheta})$ is Lipschitz continuous with a known constant $J$. This implies that for all $\mathbf{v}, \mathbf{w} \in \widehat{\mathcal{C}}$, the gradient satisfies $\|\nabla g(\mathbf{v})-\nabla g(\mathbf{w})\| \leq J\|\mathbf{v}-\mathbf{w}\|$.

In \textbf{Algorithm~\ref{algo}}, beginning with a random initial point $\overline{\mathbf{\vartheta}}^{(0 )}$, we update $\overline{\mathbf{\vartheta}}^{(o )}$ at each iteration as follows:
\begin{align} \label{eq:Vo}
\overline{\mathbf{\vartheta}}^{(o)} = \mathbf{\vartheta}^{(o)} + \frac{q^{(o - 1)}}{q^{(o)}}(\tilde{\mathbf{\vartheta}}^{(o)} - \mathbf{\vartheta}^{(o)}) + \frac{q^{(o - 1)} - 1}{q^{(o)}}(\mathbf{\vartheta}^{(o)} - \mathbf{\vartheta}^{(o - 1)}),
\end{align}
where 
\begin{align} \label{eq:q_n}
q^{(o+1)}=\frac{1+\sqrt{4\left(q^{(o)}\right)^2+1}}{2}.
 \end{align}
We then proceed along the gradient of the objective function with a specified step size $\alpha_{\overline{\mathbf{\vartheta}}}$. The resulting point $\left(\overline{\mathbf{\vartheta}}-\alpha_{\overline{\mathbf{\vartheta}}} \nabla g(\overline{\mathbf{\vartheta}})\right)$ is subsequently projected onto the feasible set $\widehat{\mathcal{C}}$, yielding
\begin{align} 
\tilde{\mathbf{\vartheta}}^{(o+1)}=\mathcal{P}_{\hat{\mathcal{C}}}\left(\overline{\mathbf{\vartheta}}^{(o)}-\alpha_{\overline{\mathbf{\vartheta}}} \nabla g\left(\overline{\mathbf{\vartheta}}^{(o)}\right)\right).
\end{align}
It is important to note that $g(\mathbf{\vartheta})$ is not convex, so $g\left(\tilde{\mathbf{\vartheta}}^{(o+1)}\right)$ may not necessarily improve the objective sequence. Consequently, $\mathbf{\vartheta}^{(o+1)} = \tilde{\mathbf{\vartheta}}^{(o+1)}$ is accepted only if the objective value $g\left(\tilde{\mathbf{\vartheta}}^{(o+1)}\right)$ is below $c^{(o)}$, which acts as a relaxation of $g\left(\mathbf{\vartheta}^{(o)}\right)$ while remaining relatively close to it.
$c^{(o)}$ can be computed as 
\vspace{0.3em}
\begin{align} 
& c^{(o)}= \frac{\sum\nolimits_{o=1}^\kappa \zeta^{(\kappa-o)} g\left(\mathbf{\vartheta}^{(o)}\right)}{\sum\nolimits_{o=1}^\kappa \zeta^{(\kappa-o)}},
\end{align}
where $\zeta \in[0,1)$ used to control the non-monotonicity degree. 
After each iteration, $c^{(o)}$ can be updated iteratively as follows
\vspace{0.3em}
\begin{align}\label{eq:cn+1}
&{c}^{(o+1)}=\frac{\zeta b^{(o)} c^{(o)}+g\left(\mathbf{\vartheta}^{(o)}\right)}{b^{(o+1)}},
\end{align}
where $c^{(1)}\!=\!g\left(\mathbf{\vartheta}^{(1)}\right)$,  $b^{(1)}\!=\!1$, and ${b}^{(o+1)}$ can be  obtained as
\vspace{0.3em}
\begin{align}\label{eq:bn+1}
&{b}^{(o+1)}=\zeta b^{(o)}+1. 
\end{align}
When the condition $g\left(\tilde{\mathbf{\vartheta}}^{(o+1)}\right) \leq c^{(o)}-\zeta\Vert\tilde{\mathbf{\vartheta}}^{(o+1)}-\overline{\mathbf{\vartheta}}^{(o)}\Vert^2$ is not satisfied, extra correction steps are employed to avoid this situation. Specifically, another point is calculated with a dedicated step size $\alpha_{\mathbf{\vartheta}}$ as 
\vspace{0.3em}
\begin{align} \label{eq:vnhat+1}
\hat{\mathbf{\vartheta}}^{(o+1)}=\mathcal{P}_{\hat{\mathcal{C}}}\left(\mathbf{\vartheta}^{(o)}-\alpha_{\mathbf{\vartheta}} \nabla g\left(\mathbf{\vartheta}^{(o)}\right)\right).
\end{align}
Then, $\mathbf{\vartheta}^{(o+1)}$ is updated by comparing the objective values at $\tilde{\mathbf{\vartheta}}^{(o+1)}$ and $\hat{\mathbf{\vartheta}}^{(o+1)}$ as
\vspace{0.4em}
\begin{align}\label{eq:vn+1}
\mathbf{\vartheta}^{(o+1)} \triangleq\left\{\begin{array}{ll}
\tilde{\mathbf{\vartheta}}^{(o+1)}, & \text { if } g\left(\tilde{\mathbf{\vartheta}}^{(o+1)}\right) \leq g\left(\hat{\mathbf{\vartheta}}^{(o+1)}\right),  \\
\hat{\mathbf{\vartheta}}^{(o+1)}, & \text {otherwise}.
\end{array}\right.
\end{align}
Finally, we emphasize that our proposed APG-based optimization approach operates on the large-scale fading timescale, which varies slowly over time.

\begin{algorithm}[!t]
\caption{Solving \eqref{eq:apgopt} Using APG Approach}
\begin{algorithmic}[1]\label{algo}
\STATE \textbf{Initialize}: $X$, $ \varsigma>1$, $ \vartheta >0$, $\mathbf{\vartheta}^{(0)}$,  $\alpha_{\overline{\mathbf{\vartheta}}}>0, \alpha_{\mathbf{\vartheta}}>0$. Set $\tilde{\mathbf{\vartheta}}^{(1)}=\mathbf{\vartheta}^{(1)}=\mathbf{\vartheta}^{(0)}, \zeta \in[0,1), b^{(1)}=1, c^{(1)} = g(\mathbf{\vartheta}^{(1)})$,   $n=1, q^{(0)}=0, q^{(1)}=1$. Choose $\overline{\mathbf{\vartheta}}^{(0)}$ from feasible set $ \widehat{\mathcal{C}}$.
\STATE \textbf{repeat}  
\STATE \textbf{while}   $\big|\frac{g\left(\mathbf{\vartheta}^{(o)}\right)-g\left(\mathbf{\vartheta}^{(o-10)}\right)}{g\left(\mathbf{\vartheta}^{(o)}\right)}\big|\!\leq\! \epsilon$ \text{or} $\big|\frac{f\left(\boldsymbol{\theta}^{(o)}\right)-f\left(\boldsymbol{\theta}^{(o-1)}\right)}{f\left(\boldsymbol{\theta}^{(o)}\right)}\big|\!\leq\! \epsilon$ \textbf{do}
\STATE \text { update } $\overline{\mathbf{\vartheta}}^{(o)}$ as~\eqref{eq:Vo}
\STATE Set $\tilde{\mathbf{\vartheta}}^{(o+1)}=\mathcal{P}_{\hat{\mathcal{C}}} (\overline{\mathbf{\vartheta}}^{(o)}-\alpha_{\mathbf{\vartheta}} \nabla g (\overline{\mathbf{\vartheta}}^{(o)}))$,
\STATE \textbf{ if } $g(\tilde{\mathbf{\vartheta}}^{(o+1)}) \leq c^{(o)}-\zeta \|\tilde{\mathbf{\vartheta}}^{(o+1)}-\overline{\mathbf{\vartheta}}^{(o)} \|^2$ \textbf{then}
\STATE $\mathbf{\vartheta}^{(o+1)}=\tilde{\mathbf{\vartheta}}^{(o+1)}$
\STATE \textbf{else}
\STATE update $\hat{\mathbf{\vartheta}}^{(o+1)}$ \text { using \eqref{eq:vnhat+1} } and then $\mathbf{\vartheta}^{(o+1)}$ \text { using \eqref{eq:vn+1} }
\STATE \textbf{end if}
\STATE update $q^{(o+1)}$ using~\eqref{eq:q_n}.
\STATE update $b^{(o+1)}$ \text { using \eqref{eq:bn+1}  and } $c^{(o+1)}$ \text { using \eqref{eq:cn+1} }
\STATE  update $o=o+1$
\STATE \textbf{end while} 
\STATE \textbf{until}{ Convergence.}
\end{algorithmic}
\end{algorithm}
\vspace{-0.5em}
\section{Numerical Results}
Here, we present numerical results to evaluate the performance of the CF-mMIMO system with joint unicast and multi-group multicast transmissions, utilizing the proposed joint power control and AP selection approach based on APG.
We assume that there are $N$ APs, each equipped with  $L = 4$ antennas, to simultaneously  serve $U$ unicast users and $M$ multicast groups, while all the users and APs are  randomly distributed within an area of size $1 \times 1$ km$^2$.    The pilot length is $\tau=U+M$, while the bandwidth is $20$ MHz and $K_\mathrm{max}=U+M$. The large-scale fading coefficients $\beta_{n,u}$ and $\lambda_{n,m,k}$ are modeled as  \cite{Björnson:TWC:2020}
$\beta_{n,u} = 10^{\frac{\text{PL}_{n,u}^d}{10}}10^{\frac{F_{n,u}}{10}}$ and
$\lambda_{n,m,k} = 10^{\frac{\text{PL}_{n,k_m}^d}{10}}10^{\frac{F_{n,k_m}}{10}}$, respectively, where $10^{\frac{\text{PL}_{n,u}^d}{10}}$ and $10^{\frac{\text{PL}_{n,k_m}^d}{10}}$ are the path loss, $10^{\frac{F_{n,u}}{10}}$ and $10^{\frac{F_{n,k_m}}{10}}$ denote the shadowing effect with $F_{n,u}\in\mathcal{N}(0,4^2)$ and $F_{n,k_m}\in\mathcal{N}(0,4^2)$ (in dB) for unicast and multicast users, respectively; $\text{PL}_{n,u}^d$ and $\text{PL}_{n,k_m}^d$ are in dB and can be calculated as
$\text{PL}_{n,u}^d = -30.5-36.7\log_{10}\Big(\frac{d_{n,u}}{1\,\text{m}}\Big)$
and
$\text{PL}_{n,k_m}^d = -30.5-36.7\log_{10}\Big(\frac{d_{n,k_m}}{1\,\text{m}}\Big),$
respectively.
The correlation among the shadowing terms from the $n$-th AP to different $h \in \{\mathcal{U} \cup \mathcal{M}\}$  unicast and multicast users is given by  
\begin{align}
	 \mathbb{E}\{F_{n,h}F_{j,h'}\} \triangleq \begin{cases} 4^22^{-\omega_{h,h'}/9\,\text{m}}, & j = n, \\ 0, & \text{otherwise}, \end{cases}
\end{align}
where $\omega_{h,h'}$ is the physical distance between users $h$ and $h'$. 
The maximum transmission power for each AP is $1$ W, and for each user  is $100$ mW, while the noise power is $-92$ dBm.
\begin{figure}
			\centering 
\includegraphics[width=0.41\textwidth]{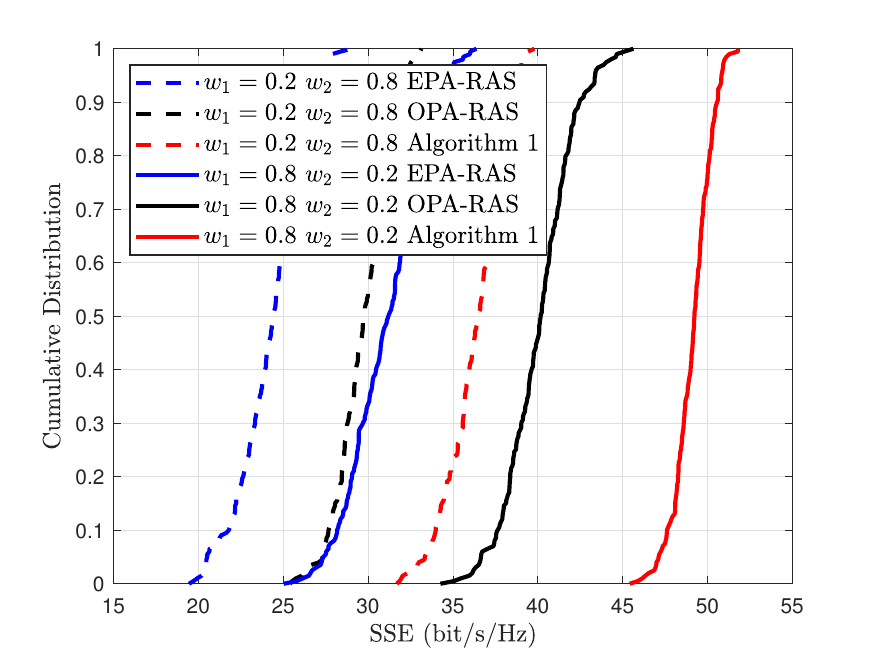}
\vspace{-0.2em}
   \caption{CDF of the SSE where the total number of unicast and multicast users is $28$ ($U=16$, $M=3$, $K_m=4$, $N=100$ and $\bar{SE}_{QoS}=SE_{QoS}=0.5$ bit/s/Hz).}
    \label{fig1}
\end{figure}
In Fig. \ref{fig1}, we examine the effectiveness of the APG-based joint power allocation and AP selection method outlined in \textbf{Algorithm 1}. To this end, the cumulative distribution function (CDF) of the SSE  of our  optimized approach in \textbf{Algorithm 1} is compared to those provided by  two benchmarks: i) equal power allocation (EPA) and random AP selection (RAS) and    ii) optimum power allocation (OPA) and  RAS selection. 
It is observed that with weighting coefficients of $w_1=0.8$ and $w_2=0.2$, the proposed joint optimization method yields a notable enhancement in the median SSE, showing improvements of approximately $58\%$ and $22\%$ compared to scenarios using random AP selection with EPA and OPA, respectively. These findings underscore the superiority of the proposed joint optimization approach over heuristic methods.
In addition, the performance improvement of our optimized approach grows as $w_1$ increases. 
This results from the enhanced flexibility provided by our scheme through the power control coefficients for unicast users, which improves interference management.

Figure \ref{fig2} shows the impact of the number of APs on the average SSE of  CF-mMIMO  system with joint unicast and
 multicast transmissions relying on our  APG-based   optimized approach. We consider three cases with different number of  multicast users. 
We can see that the SSE gain  improves significantly when the number of APs increases. Interestingly, the performance improvement is more significant in more challenging scenarios, such as when there is a higher number of multicast users in the network.

 
\begin{figure}
			\centering 
\includegraphics[width=0.41\textwidth]{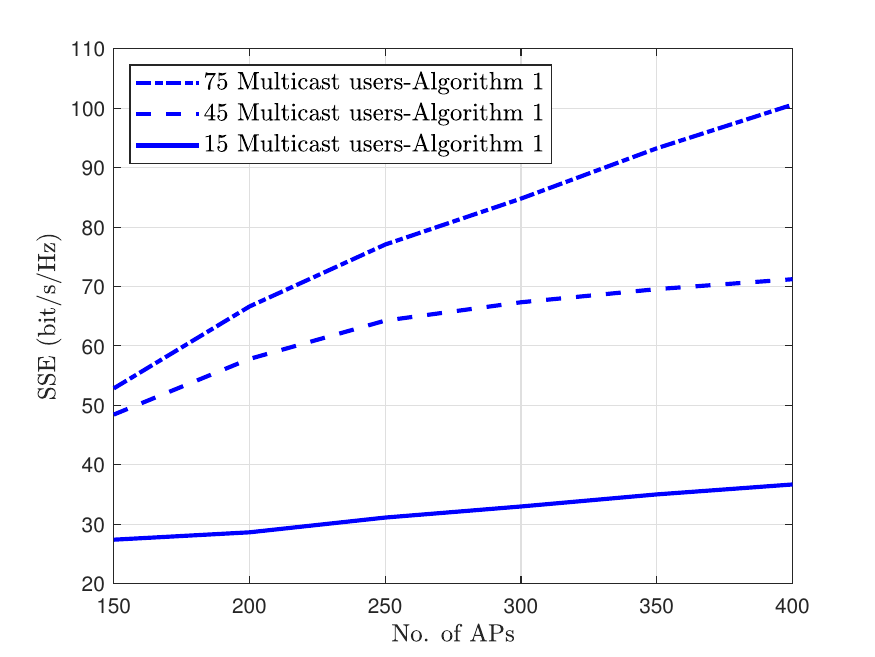}
\vspace{-0.2em}
   \caption{Average SSE versus the number of APs for different number of multicast users ($U=5$ and $\bar{SE}_{QoS}=SE_{QoS}=0.4$ bit/s/Hz).}
   \label{fig2}
\end{figure}
\vspace{-0.5em}
\section{Conclusion}
We have investigated the SSE  performance of a  CF-mMIMO  system with joint unicast and multi-group multicast transmissions and proposed a large-scale fading-based joint power allocation coefficient and user association optimization approach to maximize the SSE, subject to per-AP transmit power constraint  and  QoS  SE requirements for both unicast and multicast users. Our results indicated that the jointly optimized APG-based approach achieves a notably higher SSE compared to the benchmark methods, particularly in large-scale systems with higher numbers of multicast users in the network.  Future work will include a performance and complexity comparison between  SCA and APG-based optimization designs for CF-mMIMO systems with joint unicast and multicast transmissions, incorporating more advanced precoding schemes. 
\vspace{-0.5em}
\bibliographystyle{IEEEtran}
\bibliography{bibliography}
\end{document}